\newcounter{multifig}
\newcommand{\figcaption}[1]% #1 = text
{\stepcounter{multifig}
	\addcontentsline{lof}{figure}{\string\numberline {\arabic{multifig}}{\ignorespaces #1}}
	Figure \arabic{multifig}: #1}
\newtheorem{theorem}{Theorem}[section]
\newtheorem{remark}[theorem]{Remark}
\newtheorem{example}[theorem]{Example}
\title{Distributed Multi-resource Allocation with Little Communication Overhead}
\author{Syed Eqbal Alam\thanks{Concordia Institute for Information Systems Engineering,
		Concordia University, 
		Montreal, Quebec, Canada, email:  sy\_al@encs.concordia.ca, jiayuan.yu@concordia.ca}, 
	Robert Shorten\thanks{School of Electrical, Electronic
		and Communications Engineering, University College Dublin, Dublin, Ireland, email: robert.shorten@ucd.ie} ,
	Fabian Wirth\thanks{Faculty of Computer Science and Mathematics, University
		of Passau, Passau, Germany, email:  fabian.wirth@uni-passau.de} , and
	Jia Yuan Yu$^*$ }
\date{}
\begin{document}
	\maketitle
	
	\begin{abstract}  % put your abstract here!
		We propose a distributed algorithm to solve a special
		distributed multi-resource allocation problem with no direct
		inter-agent communication. We do so by extending a recently introduced
		additive-increase multiplicative-decrease (AIMD) algorithm, which only
		uses very little communication between the system and agents. Namely,
		a control unit broadcasts a one-bit signal to agents whenever one of
		the allocated resources exceeds capacity. Agents then respond to this
		signal in a probabilistic manner. In the proposed algorithm, each
		agent is unaware of the resource allocation of other agents. We also
		propose a version of the AIMD algorithm for multiple {\em binary}
		resources (e.g., parking spaces). Binary resources are indivisible
		unit-demand resources, and each agent either allocated one unit of the
		resource or none.  In empirical results, we observe that in both
		cases, the average allocations converge over time to optimal allocations.
	\end{abstract}
	
	\maketitle
	
	\section{Introduction}
	Distributed optimization has numerous applications in many different
	areas. These include: sensor networks; Internet of Things; smart grid;
	and smart transportation. In many instances, networks of agents
	achieve optimal allocation of resources through regular communication
	with each other and/or with a control unit. Details of such
	distributed optimization problems can be found in (among others)
	~\cite{Lin2014}, \cite{Deilami2011}, \cite{Wirth2014}, \cite{Johansson2008}, \cite{Blondel2005}, \cite{Chang2014}, \cite{Duchi2012}, \cite{Shi2013}, \cite{Wang2010} and the papers cited therein.
	
	In some applications, groups of coupled multiple resources, must be
	allocated among competing agents. Generally speaking, such problems
	are more difficult to solve in a distributed manner than those with a
	single resource. This is particularly true when communication between
	agents is constrained---either through limitations of communication
	infrastructure, or due to privacy considerations. Motivated by such
	applications, we wish to find an algorithm that is tailored for such
	scenarios and where there is no requirement of inter-agent
	communication. Our starting point is \cite{Wirth2014},
	\cite{Studli2015_2}. The authors of these papers demonstrate that simple
	algorithms from Internet congestion control can be used to solve
	certain optimization problems. Our contribution here is to demonstrate
	that the ideas therein extend to a much broader (and more useful)
	class of optimization problems.
	
	Roughly speaking, in \cite{Wirth2014}, the iterative distributed optimization
	algorithm works as follows. Agents continuously acquire an increasing
	share of the shared resource. When the aggregate agent demand exceeds
	the total capacity of resources, then the control unit sends a one bit
	capacity event notification to all competing agents and the agents
	respond in a probabilistic manner to reduce demand. By judiciously
	selecting the probabilistic manner in which agents respond, a portfolio of
	optimization problems can be solved in a stochastic and distributed
	manner. Our proposed algorithm extends this previous work. It builds
	on the choice of probabilistic response strategies described therein
	but is different in the sense that we generalize the approach to deal
	with {\em multiple resource constraints}. The communication overhead is quite low in our proposed
	solutions and the communication complexities are independent of the
	number of agents competing for resources.
	
	To be precise, suppose $n$ agents compete for $m$ divisible resources
	$R^1, R^2, \ldots, R^m$ with capacity $C^1, C^2, \ldots, C^m$,
	respectively. We use $i \in \{1, 2,\ldots, n\}$ as an index for agents and
	$j \in \{1, 2,\ldots, m\}$ to index the resources. Each agent has a cost
	function $f_i: \mathbb{R}^m \to \mathbb R$ which associates a cost to a
	certain allotment of resources and which may depend on the agent. We
	assume that $f_i$ is twice continuously differentiable, convex, and
	increasing in all variables, for all $i$. For all $i$ and $j$, we denote by $x_i^j \in \mathbb{R}_+$ the
	amount of resource $R^j$ allocated to agent $i$.  We are interested in the
	following optimization problem of {\em multi-resource}
	allocation:
	\begin{align} \label{obj_fn1}
	\begin{split}
	\min_{{x}^1_1, \ldots, {x}^m_n} \quad &\sum_{i=1}^{n} f_i(x^1_i, x^2_i,
	\ldots, x^m_i),    		\\
	\mbox{subject to} \quad
	&\sum_{i=1}^{n} x^j_i = C^j, \quad j \in \{1,  \ldots, m\},		\\
	&x^j_i \geq 0, \quad i \in \{1, \ldots, n\}, \ j \in \{1, \ldots, m\}.
	\end{split}
	\end{align}
	Note that there are $nm$ decision variables $x^j_i$ in this optimization problem.
	We denote the solution to the minimization problem by
	$x^{*} \in \mathbb{R}_+^{nm}$, where $x^* = (x_1^{*1}, \ldots, x_n^{*m})$. By compactness of the constraint set optimal
	solutions exist. We also assume strict convexity of the cost
	function $\sum_{i=1}^{n} f_i$, so that the optimal solution is unique.
	
	We propose iterative schemes that achieve optimality for the
	long-term average of allocations. Suppose $\mathbb{N}$ denotes the set of natural numbers and $k \in
	\mathbb{N}$ denotes the time steps. To this end we will denote by $x_i^j(k)$
	the amount of resource allocated at the (discrete) time step $k$. The
	average allocation is (for $i \in \{ 1, \ldots, n \}, \ j \in \{ 1, \ldots,m \}$, and $k \in \mathbb{N}$)
	\begin{align} \label{average_eqn}
	\overline{x}^j_i(k)=\frac{1}{k+1} \sum_{\ell=0}^k x^j_i(\ell).
	\end{align}
	
	In the sequel two main problems are addressed. A common feature of the two
	is that the agents do not communicate with each other. The only
	information transmitted by a control unit is the occurrence of a
	capacity event. The problems are
	
	\begin{itemize}
		\item[I.] \textbf{(Divisible Resources)} In this setting, mentioned in
		Section~\ref{divisible_mul_res}, it is assumed that resources are divisible and
		agents can obtain any amount in $[0,\infty)$. The problem is to derive
		an iterative scheme, such that
		\begin{align}
		\label{eq:longtermopt}
		\lim_{k\to\infty} \overline{x}(k) = x^{*}.
		\end{align}
		
		\item[II.] \textbf{(Unit-Demand Resources)} In this setting, treated in
		Section~\ref{bin_imp}, only $0$ or $1$ unit of the resource can be
		allocated at any given time step. In the long-term average, which is
		defined as in \eqref{average_eqn} we may still
		achieve a non-unit optimal point, but the allocation at each
		particular time step becomes more involved. The aim is still to achieve the
		optimal point on long-term average, as defined in \eqref{eq:longtermopt}.
	\end{itemize}
	In both the settings, we use the consensus of the derivatives of the cost functions of all agents competing for a particular resource to show the optimality. Suppose $x_i^* = (x_i^{*1}, \ldots, x_i^{*m}) \in \mathbb{R}_+^m$ denotes the optimal values of agent $i$ for all $m$ resources in the system. We say that the derivatives with respect to resource $R^j$ are in consensus if
	\begin{align}\label{optimality}
	\nabla_j f_i(x_i^{*}) = \nabla_j f_u(x_u^{*}), \mbox{ for } i,u \in \{1,2, \ldots, n\} \mbox{ and } j \in \{1,2,\ldots, m\}.
	\end{align}
	
	As a brief background about allocation of indivisible unit-demand
	resource; it is an active area of research, which goes back to the work
	of Koopmans and Beckmann \cite{Tjalling1957}. To get
	the details of recent works on allocation techniques of indivisible
	unit-demand resources, the interested readers can look into
	\cite{Aziz2014} and the papers cited therein. 
	The allocation of unit-demand resources of Section~\ref{bin_imp} is a generalization of
	\cite{Griggs2016} but different in the sense that the cost functions
	and the constraints used therein depend on allocation of multiple
	indivisible unit-demand resources and proposed for general
	application settings. The proposed algorithm works as follows;
	the control unit broadcasts a normalization signal to all the agents
	in system time to time, control unit updates this normalization
	signal using the utilization of resources at earlier time
	step. After receiving this signal the algorithm of each agent
	responds in a probabilistic manner either to demand for the resource
	or not. This process repeats over time.

	\section{Motivation} \label{motivation}
	As in \cite{Wirth2014} we use a modified {\em additive-increase
		multiplicative-decrease} (AIMD) algorithm. By way of background,
	the AIMD algorithm was proposed in the context of congestion avoidance
	in transmission control protocol (TCP) \cite{Chiu1989}. It involves
	two phases; the first is {\em additive-increase} phase, and second is
	{\em multiplicative-decrease} phase. In additive-increase phase, a
	congestion window size increases (resource allocation) linearly until
	an agent is informed that there is no more resource available. We call
	this a {\em capacity event}. Upon notification of a capacity event,
	the window size is reduced abruptly. This is called the {\em
		multiplicative-decrease} phase. Motivated by this basic algorithm,
	we propose a modified algorithm for solving a class of optimization
	problems. Here, as in AIMD, agents keep demanding the resources of
	different types until a capacity event notification is sent by
	the control unit to them. However, after receiving a capacity event
	notification the agents toss a coin to determine whether to reduce
	their resource demand abruptly or not. In our context, the probability of
	responding is selected in a manner that ensures that our algorithm
	asymptotically solves an optimization problem. The precise details in
	which to select these probabilities is explained in
	Section~\ref{divisible_mul_res}.
	
	The AIMD algorithm has
	been explored and used in many application areas. See
	for example the recent book by Corless et al.~\cite{Corless2016} for
	an overview of some applications; the papers \cite{Wirth2014} for distributed optimization applications;
	\cite{Crisostomi2014} for microgrid applications; and \cite{Cai2005}
	for multimedia applications. The recent literature is also rich with
	algorithms that are designed for distributed control and optimization
	applications. Significant contributions have been made in many
	communities; including, networking, applied mathematics, and control
	engineering. While this body of work is too numerous to enumerate, we point the interested readers to the works of Nedic
	\cite{Nedic2009, Nedic2011}; Cortes \cite{KIA2015}; Jadbabaie and
	Morse\cite{Jadbabaie2003}; Bullo \cite{Bullo2011}; Pappas
	\cite{Pappas2017}, Bersetkas~\cite{Bertsekas2011}; Tsitsiklis
	\cite{Blondel2005} for recent contributions. From a technical
	perspective, much of the recent attention has focussed on distributed
	primal-dual methods and application of {\em alternating direction
		method of multipliers} (ADMM) based techniques. A survey of some of
	this related work is given in \cite{Wirth2014}. Our proposed algorithm is motivated by the fact that we are interested
	in allocating multiple resource types. Such a need arises in several
	areas, some of which are described below.
	
	\begin{example}[Cloud computing]
		In cloud computing, on-demand access is provided to
		distributed facilities like computing resources
		\cite{Armbrust2010}. Resources are shared among the users and each
		user gets a fraction of these resources over time. For example,
		companies may compete for both memory and CPU cycles in such
		applications. Practically, users of cloud services do not interact
		with each other and their demand is only known to themselves (due to
		privacy concerns when different, perhaps competing, companies use the
		same shared resource). In such a setting our proposed algorithm can be
		useful for allocating resources in a way that requires little
		communication with the control unit, and such that there is no
		inter-user communication.
	\end{example}
	
	\begin{example}[Car sharing]
		Consider now a situation where a city sets aside
		a number of free (no monetary cost) parking spaces and charge points
		to service the needs of car sharing clubs in cities. An example of a
		city that implements such a policy is Dublin in Ireland. Now suppose
		that there are a number of clubs competing for such resources via
		contracts. A city must decide which spaces, and which charge points,
		to allocate to each club. Clearly, in such a situation, resources
		should be allocated in a distributed manner that preserves the
		privacy of individual companies, but which also maximizes the
		benefit to a municipality.
	\end{example}
	
	\section{Preliminaries} \label{prelim}
	
	In addition to the notations already introduced, we let
	$\mathcal{N}:=\{1, 2, \ldots, n\}$;
	$\mathcal{M}:=\{1, 2, \ldots, m\}$.
   For a sufficiently smooth function $f:\mathbb R^m \to \mathbb R$, we
	denote by $\nabla_j f$ the $j$th partial derivative of $f$, for $j \in
	\mathcal{M}$ and the Hessian of $f$ is denoted by $\nabla^2 f$.
	
	\subsection{A primer on AIMD}
	The AIMD algorithm is of interest because it can be tuned to achieve
	optimal distribution of a single resource among a group of agents. To this
	end no inter-agent communication is necessary. The agents just receive
	capacity signals from a central unit and respond to it in an stochastic
	manner. This response can be tuned so that the long-term average
	optimality criterion (cf. \eqref{eq:longtermopt}) can be achieved.
	
	In AIMD each agent follows two rules of action at each time step:
	either it increases its share of the resource by adding a fixed amount while total demand is
	less than the available capacity, or it reduces its share in a
	multiplicative manner when notified
	that global capacity has been reached. In the additive increase (AI) phase of the algorithm agents probe the
	available capacity by continually increasing their share of the
	resource.  The multiplicative decrease\index{multiplicative-decrease}
	(MD) phase occurs when agents are notified that the capacity limit has
	been reached; they respond by reducing their shares, thereby freeing
	up the resource for further distribution. This pattern is repeated by every agent as
	long as the agent is competing for the resource. The only information
	given to the agents about availability of the resource is a
	notification when the collective utilization of the resource achieves
	some capacity constraint.  At such times, so called {\em capacity
		events}, some or all agents are instantaneously informed that
	capacity has been reached.  The mathematical description of the basic continuous-time AIMD model
	is as follows.  Assume $n$ agents, and denote the share of the
	collective resource obtained by agent $i$ at time $t\in \mathbb{R}_+$ by
	$x_i(t) \in \mathbb{R}_+$.  Denote by $C$ the total capacity of the resource available to the
	entire system (which need not be known by the agents).  The capacity constraint requires that
	$\sum_{i=1}^{n} x_i(t) \le C$ for all $t$. As all agents are continuously
	increasing their share this capacity constraints will be reached
	eventually. We denote the times at which this happens by $t_k, k\in \mathbb{N}$.
	At time $t_k$ the global utilization of the resource reaches
	capacity, thus
	\begin{align*}
	\sum_{i=1}^n x_i(t_k) &= C.
	\end{align*}
	When capacity is achieved, some agents
	decrease their share of the resource.
	The instantaneous decrease of the share for agent $i$ is
	defined by:
	\begin{align} \label{MD} x_i(t_k^{+}) := \lim_{t \rightarrow t_k,
		\, t > t_k} x_i(t) = \beta_i x_i(t_k),
	\end{align}
	where $\beta_i$ is a constant satisfying $ 0\le \beta_i <1.$
	In the simplest version of the algorithm, agents are assumed to
	increase their shares at a constant rate in the AI phase: 
	\begin{align} \label{AI} x_i(t ) = \beta_i x_i(t_k) + \alpha_i (t
	- t_k), \quad t_k < t \le t_{k+1},
	\end{align} 
	where, $\alpha_i>0$,
	is a positive constant, which may be different for different
	agents. $\alpha_i$ is known as the {\em growth rate}\index{growth
		rate} for agent $i$. By writing $x_i(k)$ for the $i$th agent's
	share at the $k$th capacity event as $x_i(k) := x_i(t_k)$ we have
	\begin{align*}
	x_i(k+1) = \beta_i x_i(k) + \alpha_i T(k),
	\end{align*}
	where $
	T(k) := t_{k+1}-t_k,
	$
	is the time between events $k$ and $k+1$.  There are
	situations where not all agents may respond to every capacity
	event. Indeed, this is precisely the case considered in this paper. In
	this case agents respond asynchronously to a congestion notification
	and the AIMD model is easily extended by using our previous formalism
	by changing the multiplicative factor\index{multiplicative factor} to
	$\beta_i = 1$ at the capacity event if agent $i$ does not
	decrease.  
	
	\section{Divisible multi-resource allocation} \label{divisible_mul_res}
	
	%\subsection{Problem description}
	Let $\delta>0$ be a fixed constant; we denote by $\mathcal{F}_\delta$ the set of twice-continuously
	differentiable functions defined as follows 
	\begin{equation} \label{def_F_delta}
	%\begin{split}
	\Bigg\{ f:\mathbb R^m_+ \to \mathbb R
	\Big \lvert   \Big( x^j > 0 \implies   0 < \delta \nabla_j f(x) < x^j
	\text{ for all $j$}\Big) \text{ and } \nabla^2 f(x) \geq 0 \text{ for all }x\in \mathbb{R}^m_+ \Bigg\}.
	%\end{split}
	\end{equation}
	This is essentially the set of functions that are convex and increasing
	in each coordinate.
		We consider the problem of allocating $m$ divisible resources with capacity
	$C^j$, for $j\in \mathcal{M}$ among $n$
	competing agents, whose cost functions $f_1,\ldots,f_n$ belong to the
	set $\mathcal F_\delta$.  Each cost function is private and should be
	kept private. However, we assume that the set $\mathcal F_\delta$ is
	common knowledge: the control unit needs the knowledge of $\delta$ and the
	agents need to have cost functions from this set.
	
	Recall that $x^* = (x_1^{*1}, \ldots, x_n^{*m})$ is the solution of \eqref{obj_fn1}. We propose a distributed algorithm that determines instantaneous allocations $\{x_i^j(k)\}$, for all  $i, j$ and $k$. We also show empirically that
	for every agent $i$ and resource $R^j$, the long-term average allocations converge to the optimal allocations i.e.,
	\begin{align*}
	\overline{x}_i^j(k) \to {x}_i^{*j}
	\end{align*}
	as $k\to \infty$ (cf. \eqref{eq:longtermopt}) to achieve the
	minimum social cost.
	
	\subsection{Algorithm}
	
	Each agent runs a distinct distributed AIMD algorithm. We use $\alpha^j>0$
	to represent the additive increase factor or growth rate and $0 \leq \beta^j \leq 1$ to
	represent multiplicative decrease factor, both  corresponding to
	resource $R^j$, for $j \in \mathcal{M}$ and is uniform amongst all agents. 
	Every algorithm is initialized with the same set of parameters 
	$\Gamma^j$, $\alpha^j$, $\beta^j$ received from the control unit of the system. The constants $\Gamma^j$ is chosen based on the knowledge of fixed constant $\delta$ according to \eqref{gamma} to scale probabilities.
	We represent the one-bit \emph{capacity constraint event signals} by $S^j(k) \in \{0,
	1\}$ at time step $k$ for resource $R^j$,
	for all $j$ and $k$.
	At the start of the system the control unit initializes the capacity constraint event signals $S^j(0)$ with $0$, and updates $S^j(k)=1$ when the total allocation $\sum_{i=1}^n x_i^j(k)$ exceeds the capacity $C^j$ of a resource $R^j$ at a time step $k$. After each update, control unit broadcasts it to agents in the system signaling that the total demand has exceeded the capacity of the resource $R^j$. We describe the algorithm of control unit in Algorithm \ref{algoCU1}.
	\begin{algorithm}  \SetAlgoLined Input:
		$C^{j}$, for $j \in \mathcal{M}$.
		
		Output:
		$S^{j}(k)$, for $j \in \mathcal{M}$, $k \in
		\mathbb{N}$.
		
		Initialization: $S^{j}(0) \leftarrow 0$, for $j \in \mathcal{M}$,
		
		broadcast $\Gamma^{j} \leq \delta$;
		
		\ForEach{$k \in \mathbb{N}$}{
			
			\ForEach{$j \in \mathcal{M}$}{
				\uIf{ $\sum_{i=1}^{n} {x}_i^{j}(k) > C^{j}$}{
					$S^{j}(k+1) \leftarrow 1
					$\;
					
					broadcast $S^{j}(k+1)$;	
				}
				
				\Else{$S^{j}(k+1) \leftarrow 0$\;

		} }}
		
		\caption{Algorithm of control unit}
		\label{algoCU1}
	\end{algorithm}
	
	The algorithm of each agent works as follows.  At every time
	step, each algorithm updates its demand for resource $R^j$ in one of the
	following ways: an {\em additive increase (AI)} or a {\em multiplicative
		decrease (MD) phase}.
	In the additive increase phase, the algorithm increases its demand
	for resource $R^j$ linearly by the constant $\alpha^j$ until it
	receives a capacity constraint event signal $S^j(k) =1$ from the control unit of
	the system at time step $k$ that is,
	\begin{align*}
	x_i^j (k+1) = 	x_i^j (k) + \alpha^j.
	\end{align*}  
	After receiving the capacity constraint event signal $S^j(k) =1$ at the
	event of total demand exceeding the capacity of a resource $R^j$ (in multiplicative decrease phase), based on the probability $\lambda^j_{i}(k)$ each agent $i$ either responds to the capacity event with its updated demand for a resource $R^j$ or does not respond in the next time step with the goal that the resulting average
	allocation profile converge to the optimal allocations $x^*=(x_1^{*1},\ldots, x_n^{*m})$, for all
	agents and resources in the system. If $S^j(k) =1$, we thus have
	\begin{align*}  
	x_i^j(k+1)= \left\{
	\begin{array}{ll}
	\beta^j x_i^j(k) & \mbox{with probability } \lambda^j_{i}(k) , \\
	x_i^j(k) & \mbox{otherwise}.\\
	\end{array}
	\right.
	\end{align*}	
	The probability $\lambda^j_{i}(k)$
	depends on the average allocation and the derivative of cost function of agent $i$ with respect to $R^j$,
	for all $i$ and $j$.  It is calculated as
	follows
	\begin{align} \label{prob_x} \lambda^j_{i}(k) = \Gamma^j  
	\frac{{\nabla_j} f_i(\overline{x}^1_i(k), \overline{x}^2_i(k), \ldots,
		\overline{x}^m_i(k))}{\overline{x}^j_i(k)},
	\end{align}
	for all $i $, $j$ and $k$. This process repeats: after the reduction of consumption all agents can
	again start to increase their consumption until the next capacity event occurs.
	It is obviously required that always $ 0 < \lambda^j_{i}(k) < 1$. To
	this end the normalization factor $\Gamma^j$ is
	needed which is based on the set $\mathcal F_\delta$. The fixed constant $\delta >0$ is chosen such that $\Gamma^j$ satisfies the following
	 	\begin{align}\label{gamma_delta}
		0 <\Gamma^j \leq \delta, \text{ for all } j.  
		\end{align}
		At the beginning of
	the algorithm the normalization factor $\Gamma^j$ for
	resource $R^j$ is calculated explicitly as the following and broadcast to all agents in the system
	\begin{align}\label{gamma}
	\Gamma^j = 
	\inf_{x_1^1,\ldots,x_n^m \in \mathbb{R}_+,  f \in
		\mathcal{F}_\delta}
	\left(\frac{x^j}{\nabla_j f(x^1,
		x^2, \ldots, x^m)} \right), \text{ for all } j.  
	\end{align}
	 To capture the stochastic nature of the response to the capacity signal, we define the following Bernoulli random variables
	\begin{align} \label{bern_var}
	b^j_i(k)=
	\left\{
	\begin{array}{ll}
	1  & \mbox{with probability } \ \lambda^j_{i}(k),\\
	0 & \mbox{otherwise, } 
	\end{array}
	\right. 
	\end{align}
	for all $i$, $j$ and $k$.
	It is assumed that this set of random variables is independent.
	\begin{theorem} For a given $\delta >0$, if  $\overline{x}_i^j(k) >0$ and the cost function $f_i$ of agent $i$ belongs to $\mathcal{F}_\delta$, then for all $i, j$ and $k$, the probability $\lambda_i^j(k)$ satisfies $0 < \lambda_i^j(k) < 1$.
		\end{theorem}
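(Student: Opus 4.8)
The plan is to verify the two strict inequalities $\lambda_i^j(k) > 0$ and $\lambda_i^j(k) < 1$ separately, in each case reducing everything to the defining property of $\mathcal F_\delta$ in \eqref{def_F_delta} evaluated at the point $\overline{x}_i(k):=(\overline{x}_i^1(k),\ldots,\overline{x}_i^m(k))$, together with the two bounds $0 < \Gamma^j \le \delta$ recorded in \eqref{gamma_delta}. First I would observe that $\overline{x}_i(k)\in\mathbb R_+^m$, since by \eqref{average_eqn} it is a convex combination of the nonnegative instantaneous allocations; hence $f_i$ and its partial derivatives may legitimately be evaluated there, and the formula \eqref{prob_x} for $\lambda_i^j(k)$ is well defined because the hypothesis $\overline{x}_i^j(k)>0$ rules out division by zero.

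For the lower bound, I would use that $f_i\in\mathcal F_\delta$ together with $\overline{x}_i^j(k)>0$: the implication in \eqref{def_F_delta} then gives
\begin{align*}
0 < \delta\,\nabla_j f_i(\overline{x}_i(k)) < \overline{x}_i^j(k),
\end{align*}
and dividing by $\delta>0$ yields $\nabla_j f_i(\overline{x}_i(k))>0$. Since also $\Gamma^j>0$ and $\overline{x}_i^j(k)>0$, the right-hand side of \eqref{prob_x} is a quotient of strictly positive numbers, so $\lambda_i^j(k)>0$.

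For the upper bound, I would use $\Gamma^j\le\delta$ from \eqref{gamma_delta} to get $\Gamma^j\,\nabla_j f_i(\overline{x}_i(k))\le\delta\,\nabla_j f_i(\overline{x}_i(k))$, and then combine this with the second inequality of the display above, $\delta\,\nabla_j f_i(\overline{x}_i(k))<\overline{x}_i^j(k)$, to conclude $\Gamma^j\,\nabla_j f_i(\overline{x}_i(k))<\overline{x}_i^j(k)$. Dividing by $\overline{x}_i^j(k)>0$ and substituting into \eqref{prob_x} gives $\lambda_i^j(k)<1$. As $i,j,k$ were arbitrary, this proves the theorem.

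I do not expect a genuine obstacle here, since the argument is essentially a substitution; the only place that needs a little care is ensuring the $\Gamma^j$ used in \eqref{prob_x} really satisfies $0<\Gamma^j\le\delta$. If one prefers to derive this from the explicit infimum \eqref{gamma} rather than simply invoking \eqref{gamma_delta}, one checks that for every $f\in\mathcal F_\delta$ and every admissible $x$ with $x^j>0$ the bound $\delta\,\nabla_j f(x)<x^j$ forces $x^j/\nabla_j f(x)>\delta>0$, so the infimum in \eqref{gamma} is strictly positive, which together with the upper bound $\Gamma^j\le\delta$ is exactly what the two steps above use.
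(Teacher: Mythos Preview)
Your proof is correct and follows essentially the same route as the paper: invoke the defining inequality of $\mathcal F_\delta$ at the point $\overline{x}_i(k)$ to obtain $0<\delta\,\nabla_j f_i(\overline{x}_i(k))<\overline{x}_i^j(k)$, divide through by $\overline{x}_i^j(k)$, and then replace $\delta$ by $\Gamma^j$ using \eqref{gamma_delta}. Your version is slightly more explicit (noting well-definedness of the quotient and treating the two inequalities separately), but the argument is the same substitution the paper carries out.
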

		
		\begin{proof}
			Consider that $f_i \in \mathcal{F}_\delta$ and $\overline{x}_i^j(k) > 0$ for all $i$, $j$ and $k$ then from \eqref{def_F_delta}, we write that
			\begin{align} \label{eq117}
			 0 < \delta \nabla_j f_i(\overline{x}_i^1(k),\overline{x}_i^2(k), \ldots, \overline{x}_i^m(k)) < \overline{x}_i^j(k).
			\end{align}
			Given that $\overline{x}_i^j(k) >0$, dividing \eqref{eq117} by $\overline{x}_i^j(k)$ we obtain the following
			\begin{align} \label{eq101}
			0 <   \frac{\delta \nabla_j f_i(\overline{x}_i^1(k),\overline{x}_i^2(k), \ldots, \overline{x}_i^m(k))}{\overline{x}_i^j(k)} < 1, \text{ for all $i,j$ and $k$}.
			\end{align}
			We are aware that for a fixed constant $\delta>0$, the normalization factor $\Gamma^j$ satisfies $0 < \Gamma^j \leq \delta$,  for all $j$  (cf. \eqref{gamma_delta}). Therefore, placing $\Gamma^j$ in \eqref{eq101}, we obtain the following
			\begin{align} \label{eq114}
			0 <   \frac{\Gamma^j \nabla_j f_i(\overline{x}_i^1(k),\overline{x}_i^2(k), \ldots, \overline{x}_i^m(k))}{\overline{x}_i^j(k)} < 1, \text{ for all $i,j$ and $k$}.
			\end{align}
			Since, for all $i, j$ and $k$, an agent $i$ makes a decision to respond the capacity event of a resource $R^j$ with the probability $\lambda_i^j(k)$, which is mentioned as follows (cf.  \eqref{prob_x})
			\begin{align*}
			\lambda_i^j(k) = \Gamma^j \frac{ \nabla_j f_i(\overline{x}_i^1(k),\overline{x}_i^2(k), \ldots, \overline{x}_i^m(k))}{\overline{x}_i^j(k)} .
			\end{align*}
			Hence, after placing $\lambda_i^j(k)$ in \eqref{eq114}, we deduce that
			\begin{align*}
			0 <  \lambda_i^j(k) < 1,  \text{ for all $i,j$ and $k$}. 
			\end{align*} 
	\end{proof}	

	\begin{figure}[H]
		\centering
		\includegraphics[width=0.8\textwidth,clip=true,trim=7.5cm 8.9cm 2cm 4.8cm]{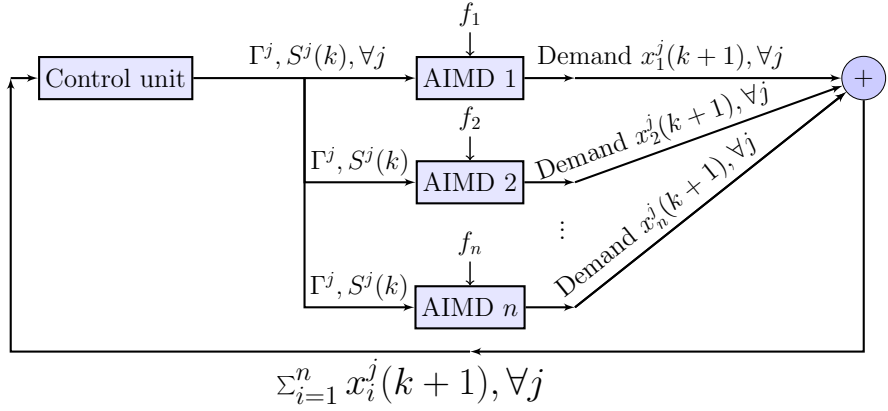}
		\caption{Block diagram of the proposed AIMD model}
		\label{Diag_AIMD}
	\end{figure}
	The system is described in Figure~\ref{Diag_AIMD} as a block diagram and the proposed distributed multi-resource allocation algorithm for each agent is
	described in Algorithm \ref{algo1}.
	
	\begin{algorithm}[H]  \SetAlgoLined Input:
		$S^{j}(k)$, for $j \in \mathcal{M}, k \in
		\mathbb{N}$
		and $\Gamma^j$,
		$\alpha^j, \beta^j$, for $j \in \mathcal{M}$.
		
		Output:
		$x^j_i(k+1)$, for $j \in \mathcal{M}$, $k \in
		\mathbb{N}$.
		
		Initialization: $x^j_i(0) \leftarrow 0$ and
		$\overline{x}^j_i(0) \leftarrow x^j_i(0)$, for
		$j \in \mathcal{M}$;

		\ForEach{$k \in \mathbb{N}$}{
			
			\ForEach{$j \in \mathcal{M}$}{

				\uIf{$S^{j}(k) = 1$}{
					$ \lambda^j_{i}(k) \leftarrow \Gamma^j
					\frac{{\nabla_j} f_i \left ( \overline{x}^1_i(k),
						\overline{x}^2_i(k), \ldots, \overline{x}^m_i(k)
						\right)}{\overline{x}^j_i(k)}$;
					
					generate independent Bernoulli random variable
					$b^j_i(k)$ with the parameter
					$\lambda^j_{i}(k)$;
					
					\uIf{ $b^j_i(k)=1$}{
						$x^j_i(k+1) \leftarrow \beta^j x^j_i(k)
						$;}
					
					\Else{ $x^j_i(k+1) \leftarrow x^j_i(k) $; }
					
				} \Else{
					$x^j_i(k+1) \leftarrow x^j_i(k) +
					\alpha^j$; }
				
				$\overline{x}^j_i(k+1) \leftarrow \frac{k+1}{k+2}
				\overline{x}^j_i(k) + \frac{1}{k+2} x^j_i(k+1);$
				
		} }
		
		\caption{Algorithm of agent $i$ (AIMD $i$) }
		\label{algo1}
	\end{algorithm}
	We observe using the Experiment \ref{results} that the average
	allocation of resource $\overline{x}_i^j(k)$ results in the optimal
	value $x_i^{*j}$ over time, for all $i$ and $j$.
	
	\begin{remark} Suppose there are $m$ resources in
		the system, then communication overhead will be
		$\sum_{j=1}^{m} S^{j}(k)$ bits at $k$ time step, for all
		$k$. In the worst case scenario this will be $m$ bits per
		time unit, which is quite low. Furthermore, the communication complexity
		does not depend on the number of agents in the system.
	\end{remark}
	
	\subsection{Experiments} \label{results}
	For convenience we use only two resources
	$R^1$ and $R^2$ in the experiment. We denote $x_i^1(k)$ as
	allocation of $R^1$ and $x_i^2(k)$ as
	allocation of $R^2$
	of agent $i$ at time step $k$, for all agents competing for the resources in the
	system.  We chose $60$ agents and the normalization factors
	$\Gamma^1 = \Gamma^2 = 1/35$. The
	initial states of all agents for $R^1$ and $R^2$ resources are initialized with $0$. For resource $R^1$ we chose the additive increase factor $\alpha^1=0.01$ and
	multiplicative decrease factor $\beta^1=0.85$
	and for resource $R^2$ we chose $\alpha^2 = 0.012$ and $\beta^2 = 0.80$, respectively. The resource capacities $C^1 = 15$
	and $C^2 = 20$ are of $R^1$ and $R^2$ resources, respectively. Suppose, for all $i$, $a_i \in \{1, 2, \ldots, 25\}$ and
		$b_i \in \{1, 2, \ldots, 10\}$ are uniformly distributed random variables. Using the random variables $a_i$ and $b_i$ we consider the following cost function to generate random costs of each agent at different time steps 
	\begin{align*}
	f_{i}(x_i^1, x_i^2) = \frac{1}{2}a_i(x_i^1)^2 + \frac{1}{2}a_i(x_i^2)^2 +
	\frac{1}{4}b_i(x_i^1)^4 + \frac{1}{4}b_i(x_i^2)^4. %\label{obj_func_1a}
	\end{align*}
	\begin{figure}
		\centering
		\begin{minipage}{0.45\textwidth}
			\centering
			\includegraphics[width=1\linewidth]{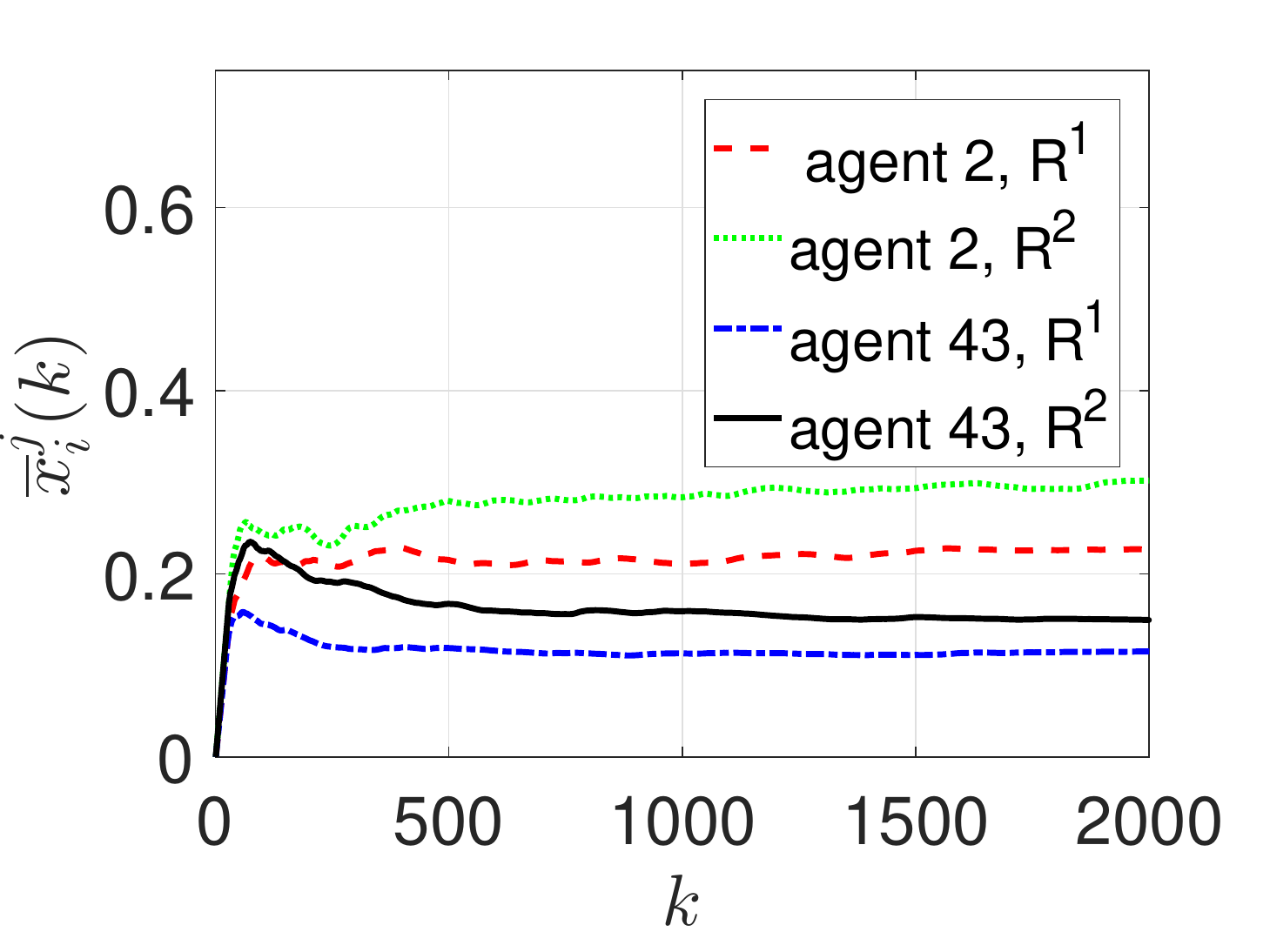}
			\captionof{figure}{Evolution of average allocation of\\ resources}
			\label{avg_AIMD}
		\end{minipage}%
		\begin{minipage}{0.45\textwidth}
			\centering
			\includegraphics[width=1\linewidth]{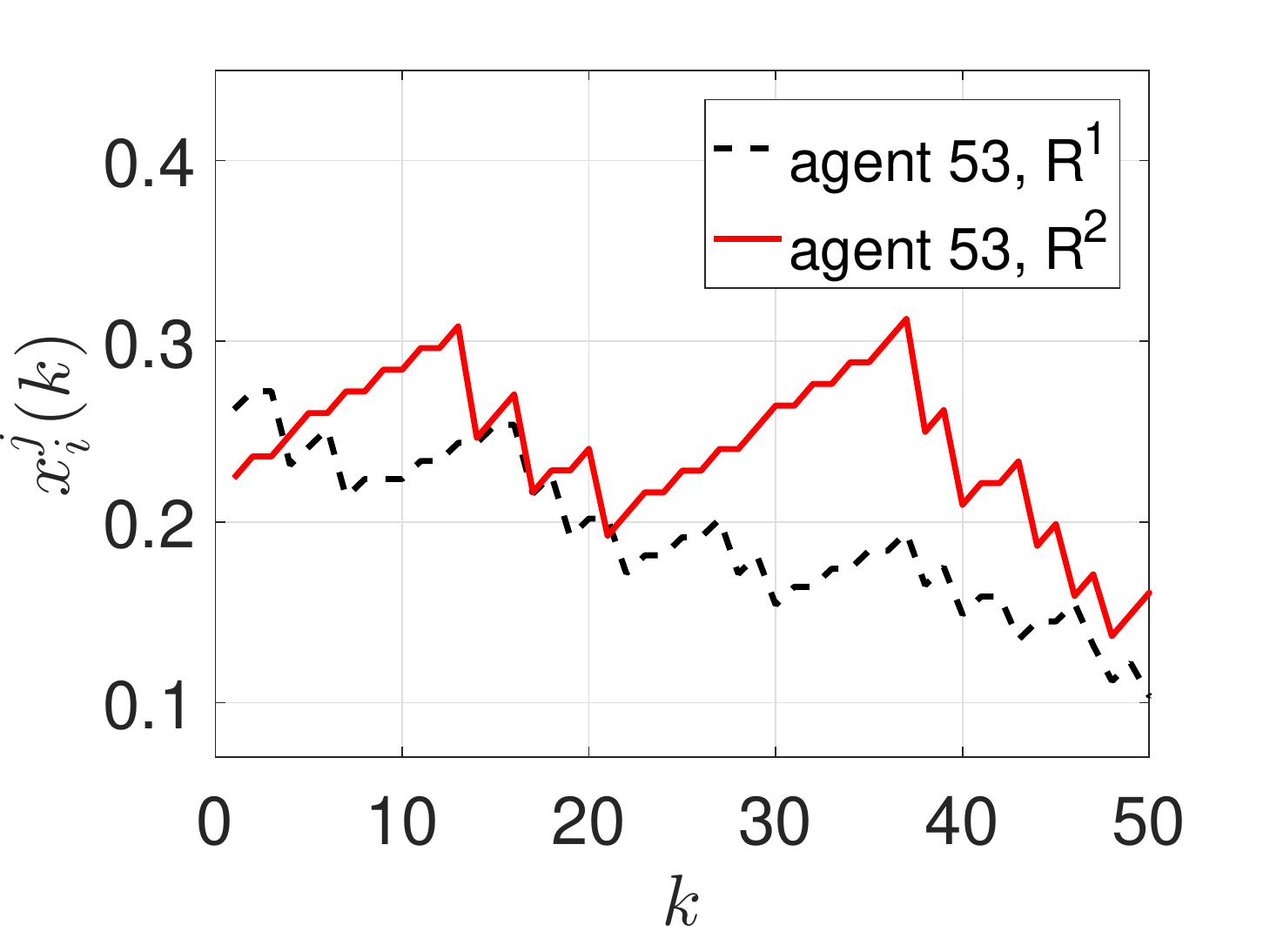}
			\captionof{figure}{Instantaneous allocation of resources for last $50$ time steps}
			\label{instant_alloc}
		\end{minipage}
	\end{figure}
	The following are some of the
	results obtained from the experiment. We select agents randomly to plot the figures, and mention
	the legend wherever necessary. It is observed in Figure (\ref{avg_AIMD}) that the average allocations $\overline{x}_i^1(k)$  and $\overline{x}_i^2(k)$ converge over time to their respective optimal values $x^{*1}_i$ and $x^{*2}_i$, for all $i$. 
	The allocation phases (AI and MD) are demonstrated in Figure (\ref{instant_alloc}) that shows the instantaneous allocation $x_i^1(k)$ and $x_i^2(k)$ over last $50$ time steps. 
	
	As we know that, to achieve optimality the derivatives of the
	cost functions of agents for a particular resource should make a consensus. Figure (\ref{err_grad}) is the error bar of derivatives $\nabla_1f_i$ and $\nabla_2f_i$ of cost functions $f_i$ for single simulation calculated across all agents. It illustrates that the derivatives of cost functions of all agents with respect to a particular resource  concentrate more and more over time around the same value. 
	Hence, the long-term average allocation of resources for the
	stated optimization problem is optimal.
	\begin{figure}
		\centering
		\begin{minipage}{0.45\textwidth}
			\centering
			\includegraphics[width=1\linewidth]{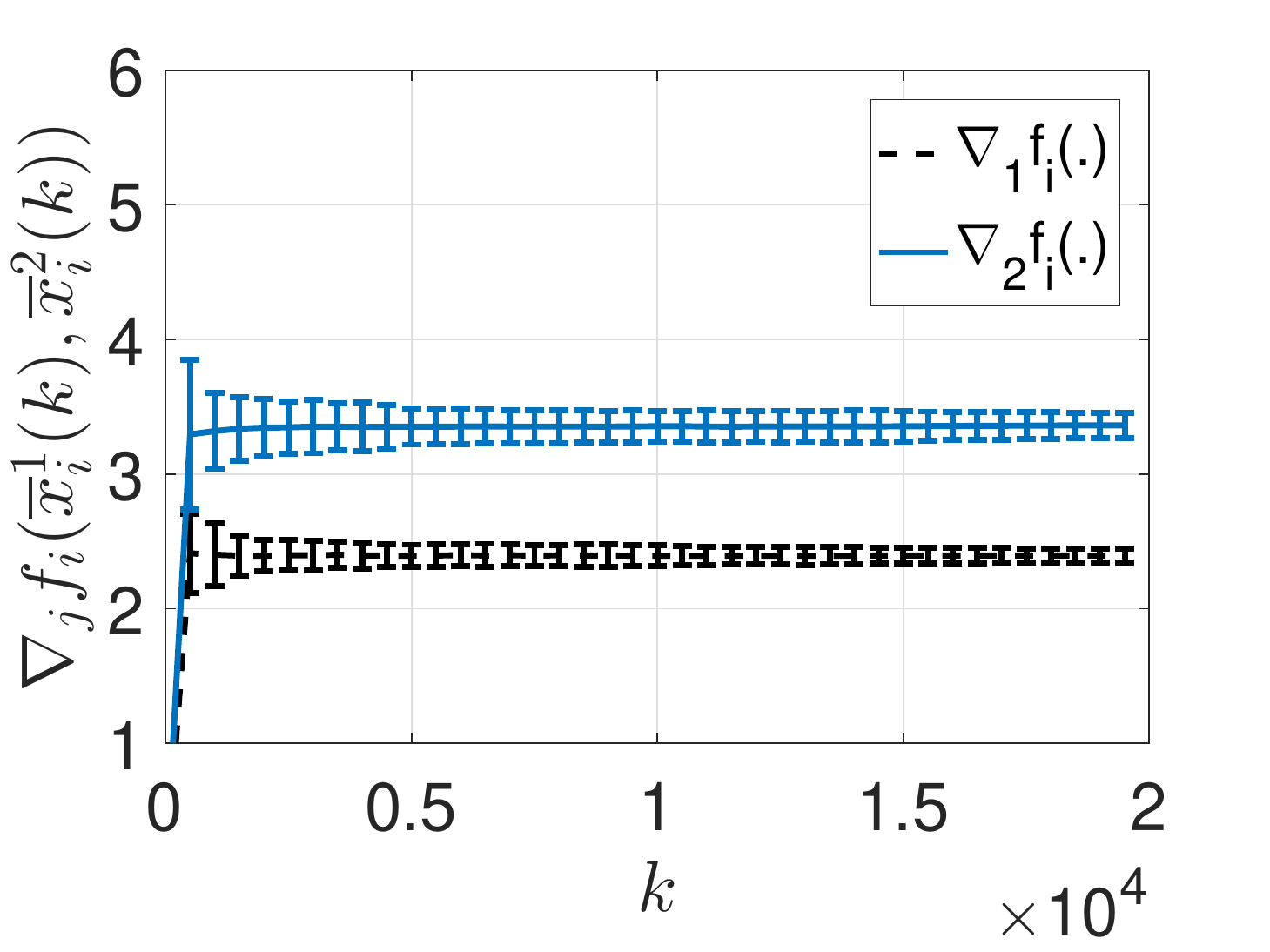}
			\captionof{figure}{Evolution of profile of derivatives
				\\ of $f_i$ of all agents}
			\label{err_grad}
		\end{minipage}%
		\begin{minipage}{0.45\textwidth}
			\centering
			\includegraphics[width=1\linewidth]{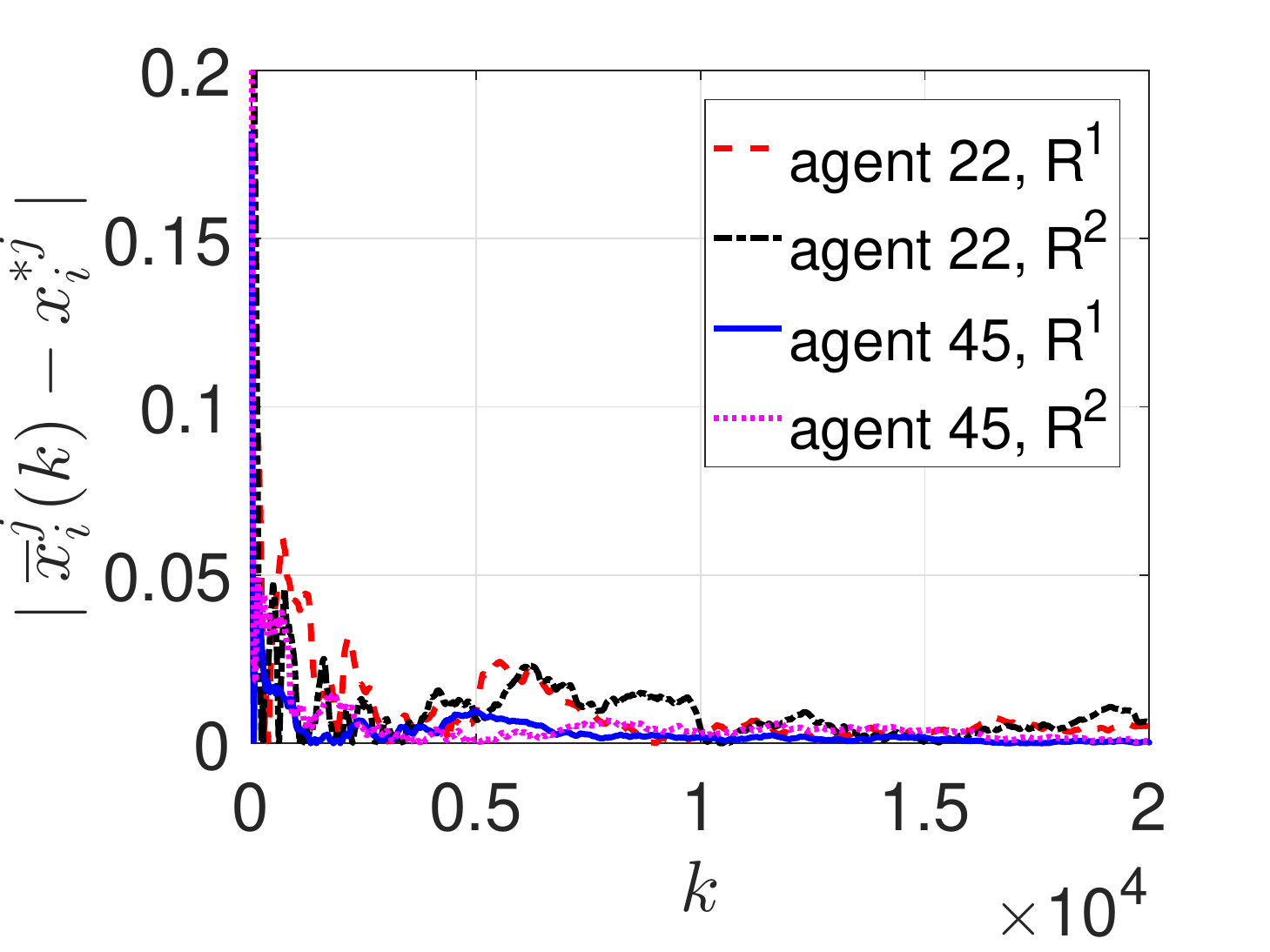}
			\captionof{figure}{Evolution of absolute difference between average allocation and the optimal allocation (calculated)}
			\label{abs_avgAIMD_Ins}
		\end{minipage}
	\end{figure}	
	To validate the results received from the algorithm, we calculate the optimal values $x_i^{*1}$ and $x_i^{*2}$ using the interior-point method for the same optimization problem, for all $i$. Let $K$ be the largest time step used in simulation, then Figure (\ref{abs_avgAIMD_Ins}) illustrates that the long-term average allocation  $\overline{x}_i^1(K)$ is approximately same as calculated optimal value $x_i^{*1}$ and similarly $\overline{x}_i^2(K) \approx x_i^{*2}$. It can be further seen in Figure (\ref{frac_func_optAIMD_Ins}) that the ratio of $\sum_{i=1}^{n} f_i(\overline{x}_i^1(K), \overline{x}_i^2(K))$ and $\sum_{i=1}^{n} f_i(x_i^{*1}, x_i^{*2})$ is also close to $1$.
	\begin{figure}
		\centering
		\begin{minipage}{0.45\textwidth}
			\centering
			\includegraphics[width=1\linewidth]{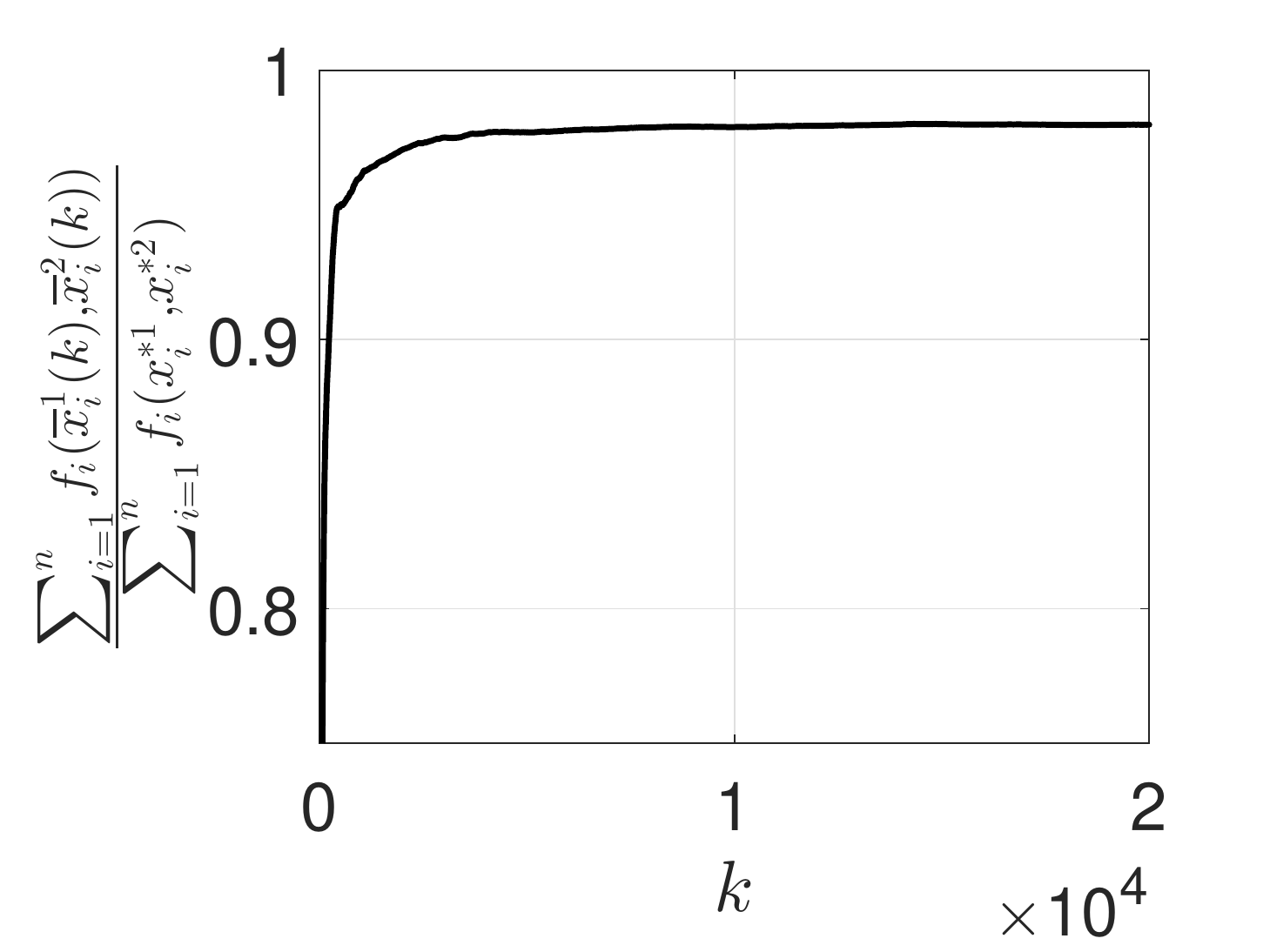}
			\captionof{figure}{Evolution of $\frac{\sum_{i=1}^{n} f_i(\overline{x}_i^1(k), \overline{x}_i^2(k))}{\sum_{i=1}^{n} f_i(x_i^{*1}, x_i^{*2})}$}
			\label{frac_func_optAIMD_Ins}
		\end{minipage}%
		\begin{minipage}{0.45\textwidth}
			\centering
			\includegraphics[width=1\linewidth]{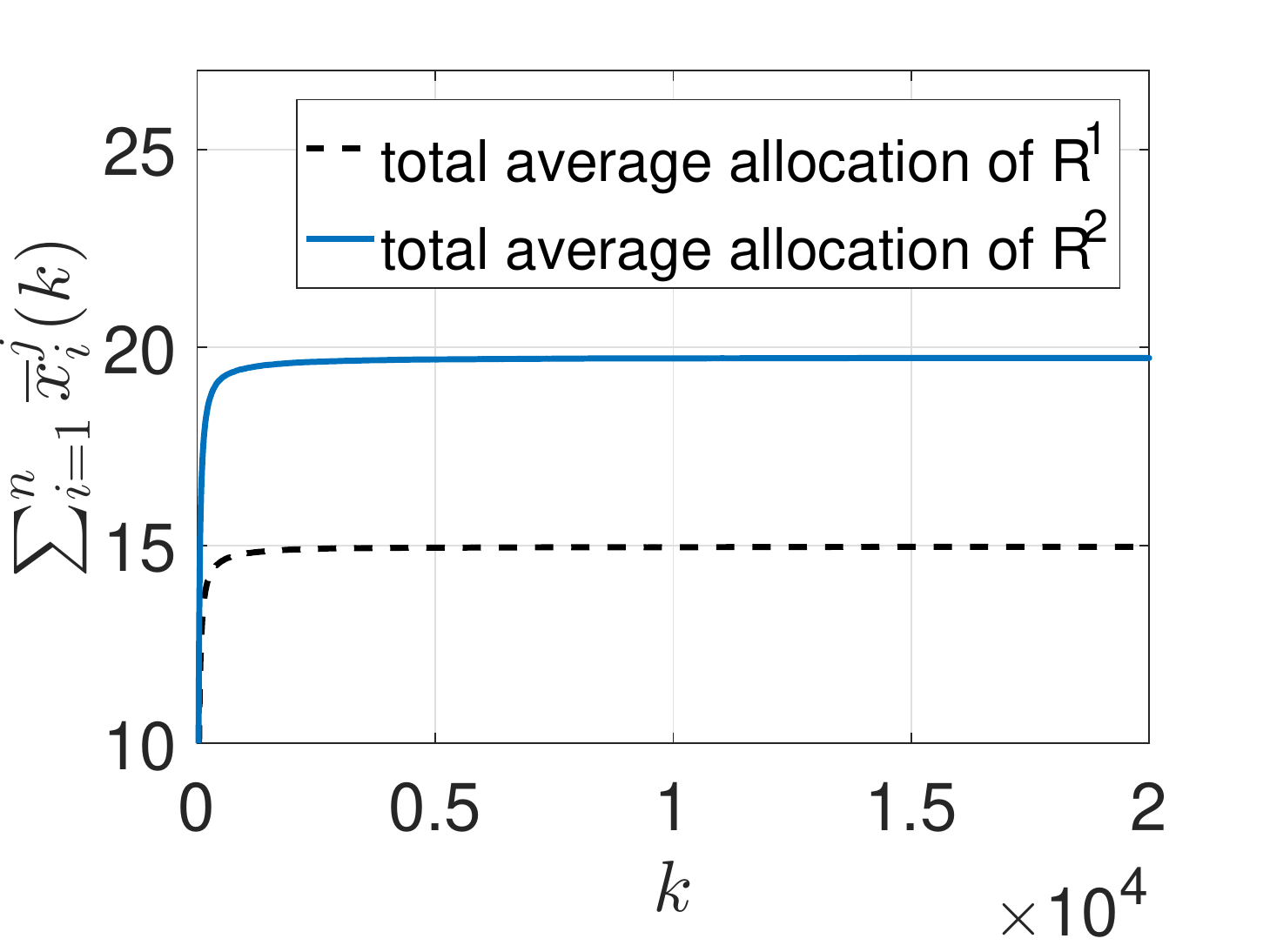}
			\captionof{figure}{Evolution of sum of average allocation of resources, the capacities are $C^1=15$ and $C^2=20$.}
			\label{sum_avg}
		\end{minipage}
	\end{figure}

	\begin{figure}
		\centering
		\includegraphics[width=2.9in]{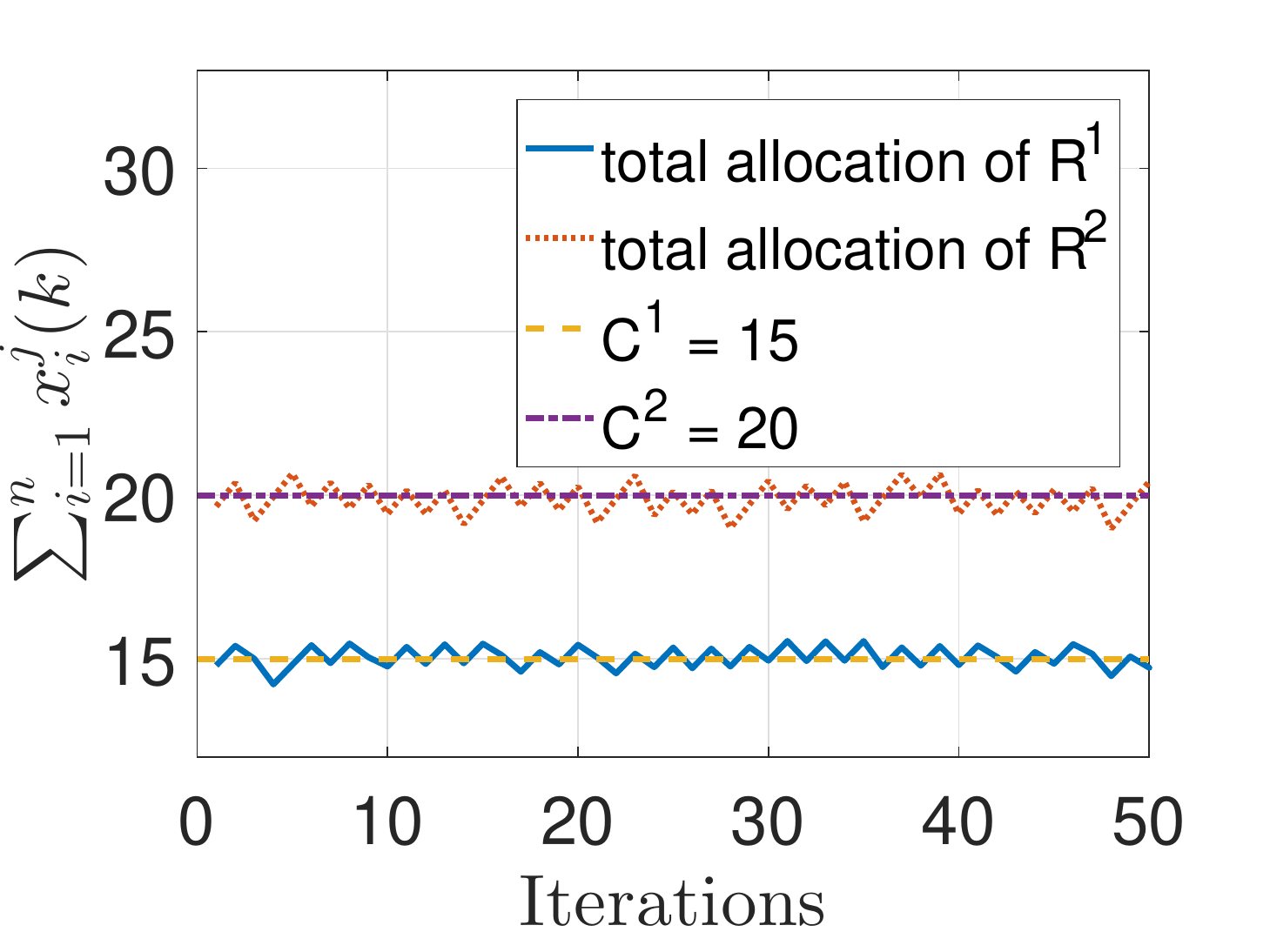} 
		\caption{Total allocation of resources for last $50$ time steps} 
		\label{sum_alloc}
	\end{figure}
		Figure (\ref{sum_avg}) illustrates the sum of average allocations $\sum_{i=1}^{n} \overline{x}_i^j(K)$ over time, we observe that it is approximately equal to the respective capacities, for all resources $R^j$. Figure (\ref{sum_alloc}) shows the sum of instantaneous allocations
	for last $50$ time steps of $R^1$
	and $R^2$ resources with capacities $C^1 = 15$ and $C^2 = 20$, respectively, we observe that the sum of instantaneous allocations are concentrated around the respective capacities. To overcome the overshoots of total allocations of resource $R^j$, we suppose $\gamma^j < 1$ and modify the algorithm of control unit to broadcast the capacity constraint event signal $S^j(k)=1$ when $\sum_{i=1}^{n} x_i^j(k) > \gamma^j C^j$, for all $j$ and $k$.
	
	\section{ Binary multi-resource allocation} \label{bin_imp} In contrast to divisible resources, indivisible unit-demand resources or the binary resources are either allocated one unit to an agent or not allocated, therefore algorithms proposed in Section \ref{divisible_mul_res} do not work in this setting. Hence, we propose different distributed algorithms which suit such resource allocation problems. We name these distributed algorithms as {\em binary multi-resource allocation} algorithms which are loose variants of the algorithms of Section \ref{divisible_mul_res}. We find many applications or cases where there is need to solve binary multi-resource
	allocation optimization problem in a distributed fashion, e.g.,
	the allocation of parking spaces for different type of cars, say
	the electric cars near the charging points and the conventional
	cars in the allotted parking spaces. A parking space is either allocated to a user or not
	allocated. In this section, we use the same notations as introduced earlier until stated
	otherwise.
	
	Suppose $n$ agents are competing for $m$ indivisible unit-demand
	resource $R^1, R^2, \ldots, R^m$ which have
	capacities $C^1, C^2, \ldots, C^m$, respectively. Let $g_1, g_2, \ldots, g_n$ be the cost functions of agents, we consider that agents do not share their cost functions or
	allocation information with other agents.  For fixed $i$, $j$ and $k$, let $\xi_i^j(k)$ be the Bernoulli random variable which denotes whether agent $i$ receives one unit of
	resource $R^j$ at time step $k$ or not. Further, let $y_i^j \in [0, 1]$ be the average allocation of indivisible unit-demand resource $R^j$ of agent $i$, which is calculated as follows
	\begin{align} \label{average_eqn2}
	{y}_i^j(k) = \frac{1}{k+1} \sum_{\ell=0}^k \xi^j_i(\ell),
	\end{align}
	for $i \in \{1,\ldots, n\}$ and $j \in \{1,\ldots, m\}$. Notice that \eqref{average_eqn2} is different from \eqref{average_eqn} in the sense that it calculates average using the Bernoulli random variable $\xi^j_i$ for indivisible unit-demand resource $R^j$, whereas \eqref{average_eqn2} calculates average using real valued $x_i^j$ for divisible resource, for all $j$. Here, we consider that all resources are utilized on average.
	
	\begin{comment}
	If $g: \mathbb{R}_+^m \rightarrow \mathbb{R}$, then similar to $\mathcal F_\delta$ of  \eqref{def_F_delta}, we define $\mathcal G_\delta$ as set of second order continuously differentiable, convex and increasing functions, where $\delta>0$ is a fixed constant (cf. \eqref{tau}). 
	\end{comment}
	%\begin{comment}
	\begin{comment}
	Similar to \eqref{def_F_delta}, for a fixed constant $\delta>0$; we define $\mathcal G_{\delta}$ as a set of second order continuously differentiable functions as follows
	\begin{align} \label{G_delta}
	\Bigg\{ g:\mathbb R_+^m \to \mathbb R \Big \lvert \mbox{ for all } y\in \mathbb [0,1]^m; \Big( \nabla_j g(y) >0 \implies 0 < {\delta}y^j < \nabla_j g(y) \text{ for all } j  \Big) \mbox{ and } \nabla^2 g(y) \geq 0   \Bigg\},
	\end{align}
	which is basically the set of convex and increasing functions. 
		\end{comment}
		
	We consider that there exists $\delta > 0$ such that $\mathcal G_{\delta}$ is a set of second order continuously differentiable, convex and increasing functions similar to \eqref{def_F_delta} and $g_1, g_2, \ldots, g_n \in \mathcal G_{\delta}$. In contrast to Section \ref{divisible_mul_res} we do not define $\mathcal G_{\delta}$ or show how to construct it here. Moreover, we assume that $\mathcal G_{\delta}$ is common knowledge to the control unit and the cost function $g_i$ depends on $y_i^j$, for all $i$ and $j$. Then, instead of defining the resource allocation problem in terms of the instantaneous allocations $\{\xi_i^j(k)\}$, we consider an objective and constraints defined in terms of averages (cf. \eqref{obj_fn1})
	\begin{align}
	\begin{split}\label{1139}
	\min_{y_1^1, \ldots, y_n^m} \quad &\sum_{i=1}^{n} g_i(y_i^1,\ldots,y_i^m),
	\\ \mbox{subject to } \quad &\sum_{i=1}^{n} y_i^j  =  C^j, \quad j \in \{1, \ldots, m\}, 
	\\  &y_i^j\geq 0, \quad i \in \{1, \ldots, n\}, \ j \in \{1, \ldots, m\}.
	\end{split}
	\end{align}
	Let $y^* = ({y}_1^{*1}, \ldots, {y}_n^{*m}) \in \mathbb{R}_+^{nm}$ denotes the solution to \eqref{1139}.
	Next, we propose a distributed algorithm that determines instantaneous allocations $\{\xi_i^j(k)\}$ and show empirically that for every agent $i$ and resource $R^j$, the long-term average allocations converge to the optimal allocations
	\begin{align*}
	\frac{1}{k+1} \sum_{\ell=0}^{k} \xi_i^j(\ell) \to {y}_i^{*j} \quad \mbox{(in distribution)}
	\end{align*}
	as $k\to \infty$, thereby achieving the
	minimum social cost in the sense of long-term averages.
	
	\subsection{Algorithm}
	The distributed binary multi-resource allocation algorithm is run by each agent in the system. Let $\tau^j  \in \mathbb{R}_+$ be the gain parameter, $\Omega^j(k) \in \mathbb{R}_+$ be the normalization factor and $C^j$ be the capacity of resource $R^j$, for all $j$. The control unit updates $\Omega^j(k)$ according to \eqref{omega} at each time step and broadcasts it to all agents in the system, for all $j$ and $k$.  When an agent
	joins the system at a time step $k$; it receives the set of parameters
	$\Omega^j(k)$, $\tau^j$ and $C^j$ for resource $R^j$, for all
	$j$. Each algorithm updates its resource
	demand over time either by demanding one unit of the resource or
	not demanding it.
	
	The normalization factor $\Omega^j(k)$ depends on its earlier value, $\tau^j$, the capacity $C^j$ and total utilization of resource $R^j$ at earlier time
	step, for all $j$ and $k$. After receiving this signal the
	agent $i$'s algorithm responds in a probabilistic manner. It
	calculates its probability $\sigma^j_i(k)$ using its
	average allocation of resource $R^j$ and the derivative
	of its cost function, for all $j$ and $k$, as mentioned in \eqref{prob_x2}. Using this probability it finds out the Bernoulli
	distribution, similar to \eqref{bern_var}, based on the outcome of the resulting random
	variables $0$ or $1$, the algorithm decides whether to
	demand one unit of the resource or not. If the random variable takes
	value $1$, then the algorithm demands one unit of the resource
	otherwise it does not demand that resource. This process repeats over time. The block diagram of the system is described in Figure~\ref{Diag_BAIMD}. We describe the proposed {\em binary multi-resource allocation} algorithm for the control unit in Algorithm \ref{algoCU2} and for each agent in Algorithm \ref{algo3}.
	\begin{figure}[H]
		\centering
		\includegraphics[width=0.8\textwidth,clip=true,trim=7.5cm 8.9cm 2cm 4.5cm]{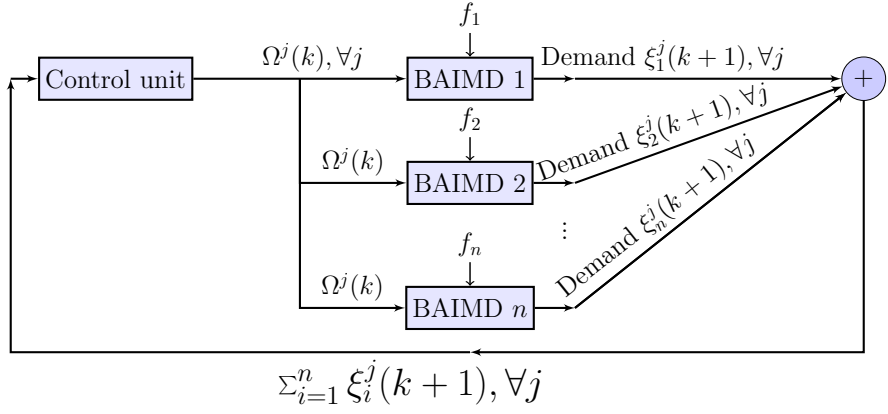}
		\caption{Block diagram of the proposed Binary resource allocation model}
		\label{Diag_BAIMD}
	\end{figure}
	\begin{algorithm}[H]  \SetAlgoLined Input:
		$C^{j}$, $\tau^j$ for $j \in \mathcal{M}$.
		
		Output:
		$\Omega^{j}(k)$, for $j \in \mathcal{M}$, $k \in \mathbb{N}$.
		
		Initialization: $\Omega^{j}(0) \leftarrow fixed\_value$, for $j \in \mathcal{M}$,
		
		\ForEach{$k \in \mathbb{N}$}{
			
			\ForEach{$j \in \mathcal{M}$}{
				
				calculate $\Omega^j(k+1)$ according to \eqref{omega} and broadcast in the system;	
		} }
		\caption{Algorithm of control unit}
		\label{algoCU2}
	\end{algorithm}
	
	\begin{algorithm}[H]  \SetAlgoLined Input:
		$\Omega^j(k)$, for $k \in \mathbb{N}, j \in
		\mathcal{M}$.
		
		Output: $\xi^j_i(k+1)$, for $j \in \mathcal{M}$ and
		$k \in \mathbb{N}$.
		
		Initialization: $\xi^j_i(0) \leftarrow 1$ and
		${y}^j_i(0) \leftarrow \xi^j_i(0)$, for
		$ j \in \mathcal{M}$.
		
		\ForEach{$k \in \mathbb{N} $}{

			\ForEach{$j \in \mathcal{M}$}{
				$\sigma^j_i(k) \leftarrow \Omega^j(k)
				\frac{{y}^j_i(k)}{ \nabla_j
					{g_i({y}_i^1(k)}, \ldots,
					{y}^m_i(k))}$; 
				
				generate Bernoulli independent random variable
				$b^j_i(k)$ with the parameter $\sigma^j_i(k)$;
				
				\eIf{ $b^j_i(k) = 1$}{
					$\xi^j_i(k+1) \leftarrow 1$; }
				{$\xi^j_i(k+1) \leftarrow 0$; }

				${y}^j_i(k+1) \leftarrow \frac{k+1}{k+2}
				{y}^j_i(k) + \frac{1}{k+2} \xi^j_i(k+1);$} 
		}
		\caption{Binary multi-resource allocation algorithm of agent $i$}
		\label{algo3}
	\end{algorithm}	
	The proposed algorithms are different from that of Section \ref{divisible_mul_res}. In Algorithm \ref{algoCU2}, the control unit broadcasts $\Omega^j(k) \in \mathbb{R}_+$ at each time step $k$, unlike one bit capacity event signal $S^j(k) = 1$ in Algorithm \ref{algoCU1} (after total demand exceeds the capacity of resource $R^j$). The normalization factor $\Omega^j(k)$ is different from $\Gamma^j$ as well, in the manner that $\Omega^j(k)$ is updated at every time step by control unit based on the total utilization and the capacity of the resource $R^j$. The control unit broadcasts it to all agents in the system, whereas $\Gamma^j$ is a fixed quantity calculated by control unit based on the allocations and derivatives of cost functions of agents for each resource $R^j$. An agent receives $\Gamma^j$ from control unit once when it joins the system. Apart from these, the way of calculating the probability $\sigma_i^j(k)$ (cf. \eqref{prob_x2}) with which an agent decides about its resource demand is also quite different from  \eqref{prob_x}. To calculate $\sigma_i^j(k)$, $\Omega^j(k)$ is used which varies with time, in contrast to fixed value $\Gamma^j$ used in \eqref{prob_x} and also the ratio of ${y}_i^j$ and $\nabla_j g_i(.)$ used here is reciprocal of \eqref{prob_x}.

After introducing the algorithms, we describe here how to calculate different factors. The control unit updates the normalization factor $\Omega^j(k + 1)$ in the following manner for all $j$ and $k$ using the parameters and the common knowledge of $\mathcal{G}_\delta$ and broadcasts it to all agents in the system
\begin{align} \label{omega}
\begin{split}
\Omega^j(k+1) = \Omega^j(k) -  \tau^j  \left (\sum_{i=1}^n \xi^j_i(k) -C^j \right ).
\end{split}
\end{align}
To keep the system stable we consider \emph{gain parameter} $\tau^j$ which is
derived from the expected utilization of a resource $R^j$ (similar to \cite{Griggs2016}), thus
%\begin{comment}
\begin{align*} %\label{tau}
\tau^j = \inf_{{y}_1^1, \ldots, {y}_n^m \in \mathbb{R}_+, g_1, \ldots, g_n \in \mathcal{G}_\delta} \left ( \sum_{i=1}^{n} \frac{ y_i^j}{\nabla_j{g_i({y}_i^1, {y}_i^2, \ldots, {y}_i^m)}}  \right )^{-1}, \text{ for all $j$.} 
\end{align*}
	When an agent $i$ receives $\Omega^j(k)$ from the control unit at time step $k$, it calculates the probability $\sigma_i^j(k)$  in the following manner to make a decision about its demand for resource
		$R^j$ at next time step
	\begin{align} \label{prob_x2}
	\sigma_i^j(k) =  \Omega^j(k) \frac{ y^j_i(k)}{ \nabla_j{g_i({y}_i^1(k), {y}_i^2(k), \ldots, {y}_i^m(k))}}, \text{ for all $i$, $j$ and $k$}.
	\end{align}	
	Here, $\Omega^j(k)$ is used to bound the
	probability $\sigma_i^j(k) \in (0,1)$, for all $i$, $j$ and $k$.

	Suppose that $\Omega^j(k)$ takes the floating point values
	represented by $\mu$ bits. If there are $m$ indivisible
	unit-demand resources in the system, then the communication
	overhead in the system will be $\mu m$ bits per time
	unit. This is in contrast to the divisible resource allocation
	problem, where in the worst case scenario only $m$ bits are
	required per time unit. The communication complexity in this
	case also is independent of the number of agents in the
	system.
	\subsection{Experiments} \label{bin_exp}
	
	In this experiment, for convenience we used two resources
	$R^1$ and $R^2$ and three cost functions. Each cost function represents a class, and a set of
	agents belong to each class. The cost depends on the average allocation of indivisible unit-demand resources. For agent $i$, we consider the following cost functions:
	
	\begin{equation*} \label{bin_func} g_{i}(y_i^1,y_i^2)= \left\{
	\begin{array}{ll}
	(y_i^1+y_i^2)^2/2 + (y_i^1+y_i^2)^4/8 & \mbox{w. p.} \ 1/3, \\
	(y_i^1+y_i^2)^4/4 + (y_i^1+y_i^2)^6/12   & \mbox{w. p.} \ 1/3, \\
	(y_i^1+y_i^2)^6/6 + (y_i^1+y_i^2)^8/16 & \mbox{w. p.} \ 1/3.
	\end{array}
	\right.
	\end{equation*}
	
	We consider $900$ agents competing for
	indivisible unit-demand resources in the system. Along with
	this, we chose the capacity of resource
	$R^1$ as $C^1 = 450$ and that of $R^2$
	as $C^2 = 350$. The parameters are initialized with the following values, $\Omega^1(0) = 0.350 $,
	$\Omega^2(0) = 0.328$, $\tau^1 = 0.0002275$ and
	$\tau^2= 0.0002125$.  In the experiment, for the sake of simplicity we assume that all agents join the system at the start of the algorithm. We classified the agents as follows; agents $1$ to $300$
	belong to class $1$, agents $301$ to $600$ belong to class $2$
	and agents $601$ to $900$ belong to class $3$. 	In the experiment we observed that a few times $\sigma_i^j(k)$ overshoots $1$, to overcome it we use $\sigma_i^j(k) = \min \Big\{\Omega^j(k) \frac{ y^j_i(k)}{ \nabla_j{g_i({y}_i^1(k), {y}_i^2(k), \ldots, {y}_i^m(k))}}, 1 \Big\}$, for all $i, j$ and $k$.
		\begin{figure}[H]
		\centering
		\includegraphics[width=2.8in]{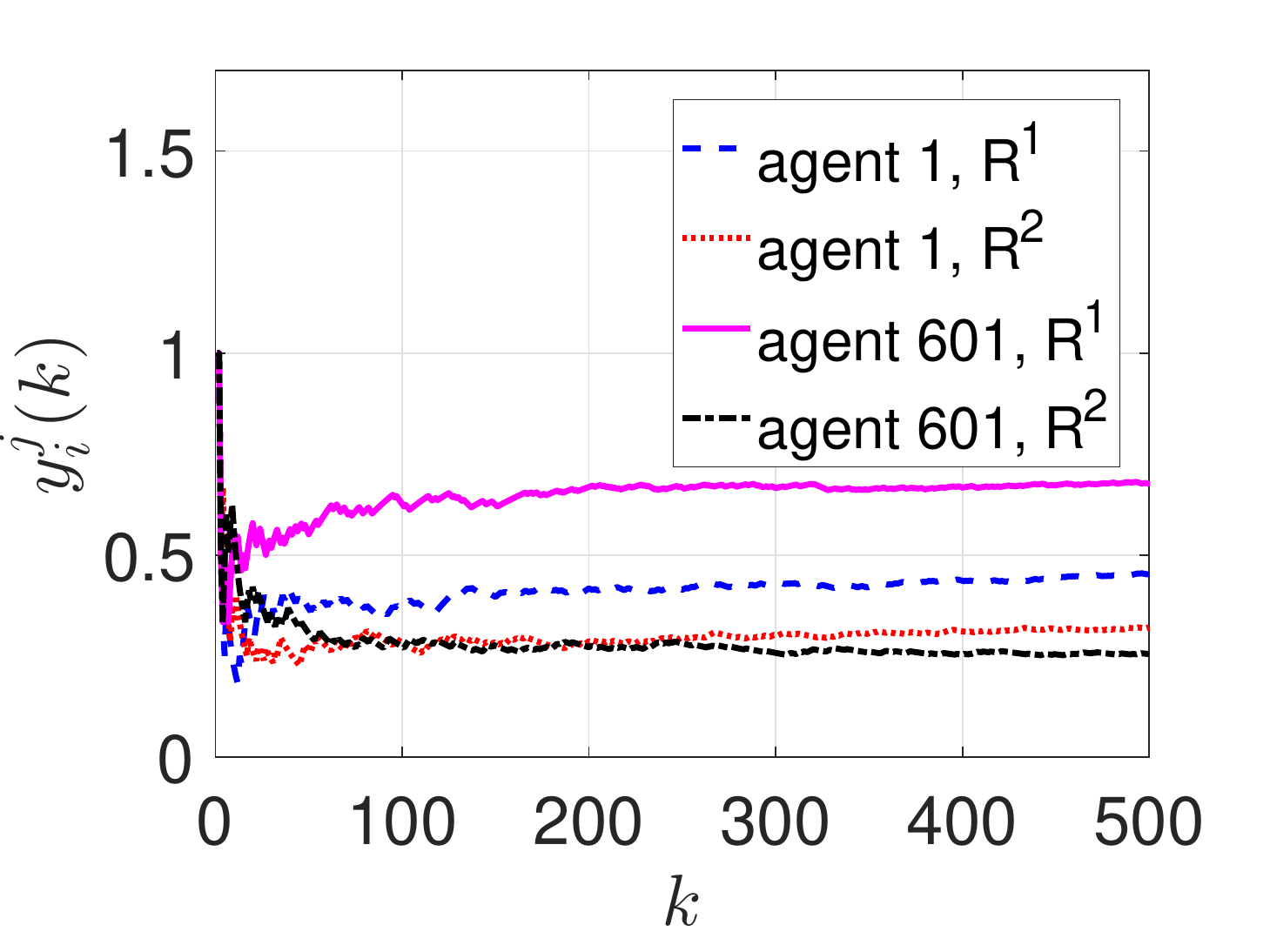} 
		\caption{Evolution of average allocation of resources} 
		\label{avg_BAIMD}
	\end{figure}
	We illustrate some of the results of the experiment here. Figure (\ref{avg_BAIMD}) shows that the
	average allocations ${y}_i^1(k)$ and ${y}_i^2(k)$ converge over
	time to their respective optimal values $y_i^{*1}$ and $y_i^{*2}$, respectively for agent $i$.
	As mentioned earlier in \eqref{optimality}, to show the optimality of a solution, the derivative of cost function of all agents with respect to a particular resource should make a consensus. Since, the derivatives $\nabla_1g_i$ and $\nabla_2g_i$ are same, for all $i$, we mention here just one of these derivatives. Figure (\ref{err_grad_BAIMD}) illustrates the profile of derivatives of cost functions of all agents for a single simulation with respect to resource $R^1$, we observe that they converge with time and hence make a consensus.
	The empirical results thus obtained, show the convergence of the long-term average allocation of resources to their respective optimal values using the consensus of derivatives of the cost functions. 
	\begin{figure}[H]
		\centering
		\begin{minipage}{0.45\textwidth}
			\centering
			\includegraphics[width=1\linewidth]{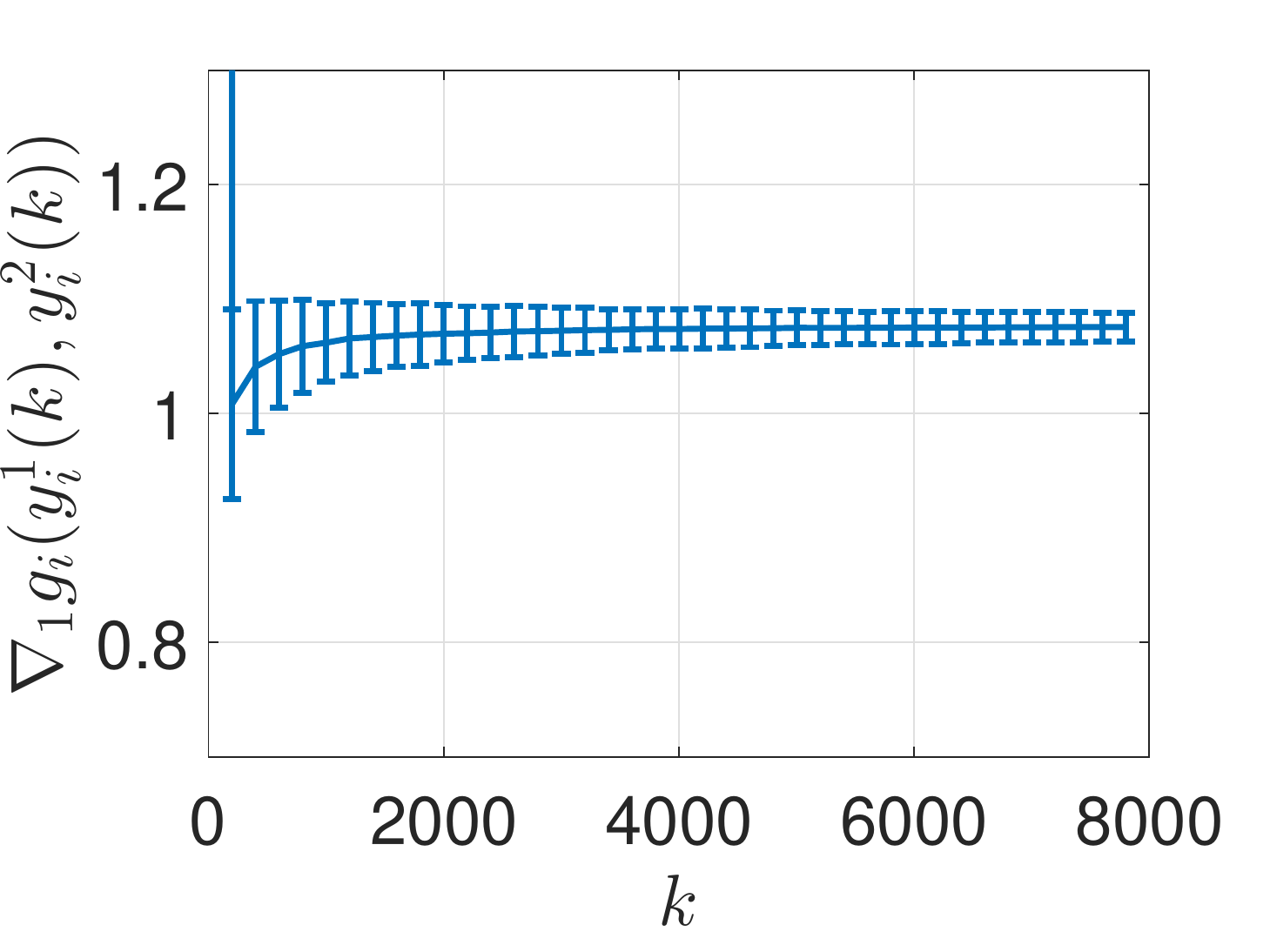}
			\captionof{figure}{Evolution of profile of derivatives \\of cost functions of all agents}
			\label{err_grad_BAIMD}
		\end{minipage}%
		\begin{minipage}{0.45\textwidth}
			\centering
			\includegraphics[width=1\linewidth]{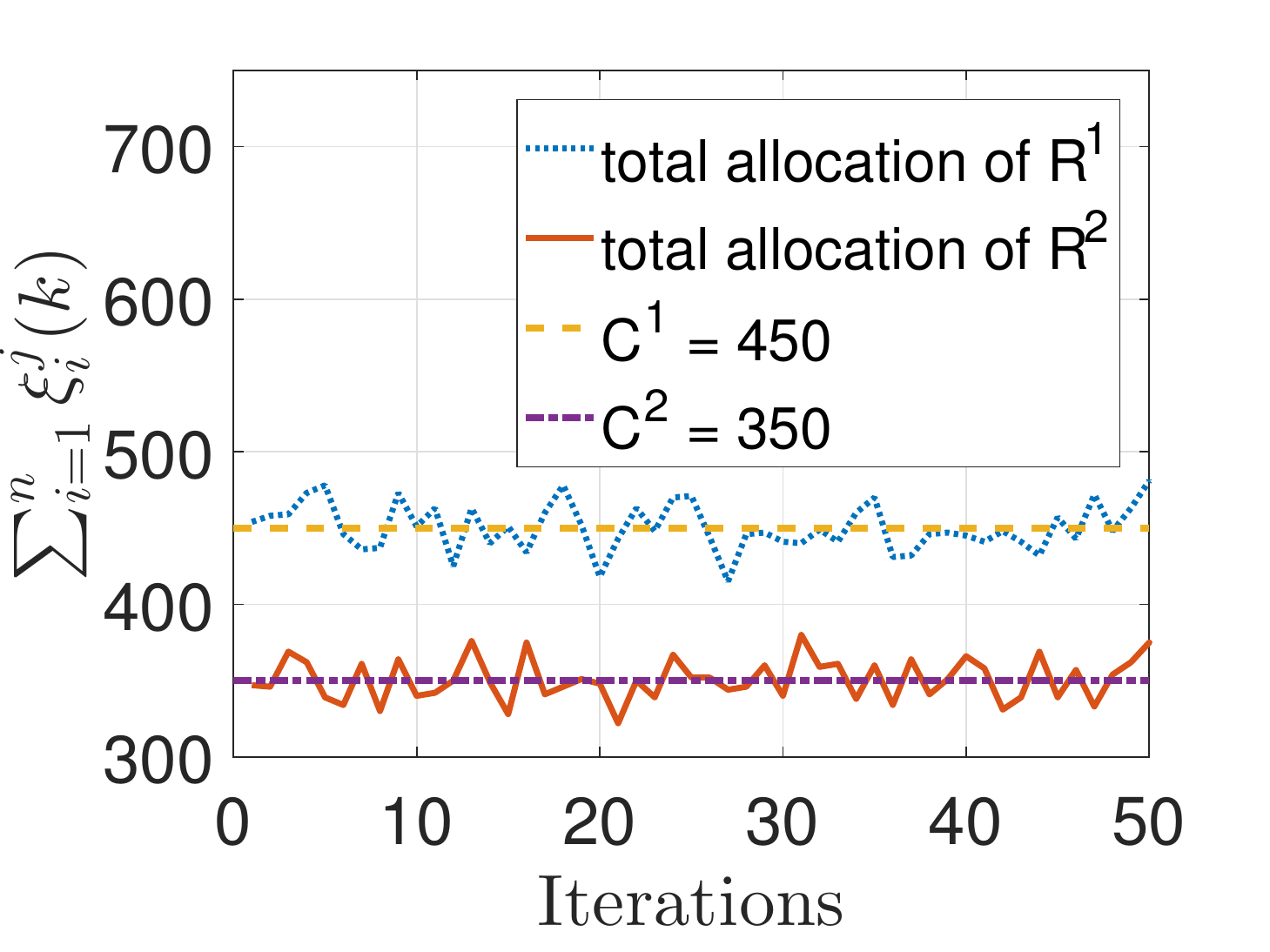}
			\captionof{figure}{Total allocation of resources over last $50$ time steps}
			\label{sum_alloc_BAIMD}
		\end{minipage}
	\end{figure}
    We see the total allocation of resources $R^1$ and $R^2$
	for last $50$ time steps in Figure (\ref{sum_alloc_BAIMD}). It is observed that most of the allocations are concentrated around their respective resource capacities. To reduce the overshoot of a resource $R^j$ we consider a constant $\gamma^j <1$ and modify the algorithm of control unit to calculate $\Omega^j(k+1)$ (cf. \eqref{omega}) in the following manner
	\begin{align*}
	\Omega^j(k+1) = \Omega^j(k) - \tau^j \left
	(\sum_{i=1}^n \xi^j_i(k) -\gamma^jC^j \right ),
	\end{align*}
	for all $j$ and $k$.
	
	\section{Conclusion} \label{conc} We proposed algorithms for solving the multi-variate optimization problems for capacity constraint applications in a distributed manner for divisible as well as indivisible unit-demand resources, extending a variant of AIMD algorithm. The features of the proposed algorithms are; it involves little communication overhead, there is no agent to agent communication needed and each agent has its own private cost functions. We observed that the long-term average allocation of resources reach the optimal values in both the settings. It is interesting to solve the following open problems: first is to provide a theoretical basis for the proof of convergence and second is to find the bounds for the rate of convergence, and its relationship with different parameters or the number of occurrence of capacity events. The work can also be extended in several application areas like smart grids, smart transportation or broadly in Internet of things where sensors have very limited processing power and battery life.
	
	\bibliographystyle{amsplain}
	\bibliography{DistOpt_bib}
\end{document}